\newtheorem{thm}{Theorem}[section]
\newtheorem{lem}{Lemma}[section]
\newtheorem{cor}{Corollary}[section]
\newtheorem{defi}{Definition}[section]
\numberwithin{equation}{section}
\def\f#1{{\mathbb{F}}_{#1}}
\def\gbinom#1#2{\begin{bmatrix}#1\\#2\end{bmatrix}_q}
\begin{document}

\title[RGHWs of Cyclic Codes]{Relative Generalized Hamming Weights of Cyclic Codes}
\author[J. Zhang]{Jun Zhang}
\address{School of Mathematical Sciences, Capital Normal University, Beijing 100048, P.R. China}
\email{junz@cnu.edu.cn}
\author[K. Feng]{Keqin Feng}
\address{Department of Mathematics, Tsinghua University, Beijing 100084, P.R. China}
\email{kfeng@math.tsinghua.edu.cn}

\thanks{The second author is supported by NSFC No.11471178 and the Tsinghua National Lab. for Information Science and Technology.}

\begin{abstract}
Relative generalized Hamming weights (RGHWs) of a linear code respect to a linear subcode determine the security of the linear ramp secret sharing scheme based on the code.
They can be used to express the information leakage of the secret when some keepers of shares are corrupted. Cyclic codes are an interesting type of linear codes and have wide applications in communication and storage systems. In this paper, we investigate the RGHWs of cyclic codes with two nonzeros respect to any of its irreducible cyclic subcodes. Applying the method in~\cite{YLFL}, we give two formulae for RGHWs of the cyclic codes. As applications of the formulae, explicit examples are computed.
\end{abstract}
\keywords{Relative generalized Hamming weight, cyclic codes, character sums, Gauss sums.}
\maketitle


\section{Introduction}
A secret sharing scheme splits a secret into several pieces which are distributed to participants, so that only specified subsets of participants can recover
the secret. Shamir~\cite{Sh79} proposed the first secret sharing scheme in which the subsets of participants unable to reconstruct the secret can not get any information about the secret. A secret sharing scheme with this property is called a perfect scheme. Later, realizing that Shamir's construction is indeed from Reed-Solomon codes~\cite{MS81}, Shamir's construction was generalized to secret sharing schemes based on linear codes~\cite{CC06,Ma93,CLX}. All these constructions are perfect schemes. The security of secret sharing schemes based on linear codes are completely characterized by the minimal codewords in the dual codes~\cite{Ma93}. Non-perfect secret sharing schemes were proposed~\cite{BC85} where the subsets of participants unable to reconstruct the secret could get some information about the secret. The term ``tamp" is used in this scenario. It was shown in~\cite{GMMRL} that in secret sharing schemes based on linear codes, the amount of information leakage of the secret when some participants are corrupted is completely determined by the RGHWs of the involved codes.

The concept of RGHWs of linear codes is an extension of generalized Hamming weights (GHW) of linear codes, but the studies of RGHWs can not be substituted by those of GHWs. The concept of GHWs of linear codes was first introduced by Wei~\cite{We91} to study the linear codes over wire-tap channel of type II and was later used by Ozarow and Wyner~\cite{OW85} in cryptography to characterize the performance of linear codes when used over such a channel. The GHWs of linear codes have been used in many other applications~\cite{Gur03,Hel95,JL07,KTFS93,NYZ11}, so the study of GHWs of linear codes has attracted much attention in the past two decades. Comparing with GHWs, the RGHWs are defined on a pair of linear codes (a linear codes and one of its subcodes); the RGHWs have only one more restriction which makes they are often larger than the corresponding GHWs; and they should have applications parallelly to those of GHWs which need to be found in the furture.

The concept of RGHWs was proposed by Luo et al.~\cite{LMVK05} in their study of communication over the wire-tap channel of type II. They proved the equivalence between RGHW and the corresponding relative dimension/length profile (RDLP), similarly as the relation between the dimension/length profile and the GHW demonstrated by Forney~\cite{For94}. Bounds on RGHWs and constructions of good codes are studied in~\cite{ZLVD11,ZLD13,Geil15}.


So far, only few classes of linear codes have been examined for their RGHW/RDLP. In~\cite{LCL08}, RGHWs of almost all 4-dimensional linear code respect to its subcodes are determined by using techniques in finite projective geometry. In~\cite{GMMRL}, the authors estimated RGHWs of general linear codes by Feng-Rao approach~\cite{FR93} and in particular they checked their bounds for one-point Hermitian codes. In~\cite{MG14}, RGHWs of $q$-ary Reed-Muller codes are studied, closed formula expressions for $q$-ary Reed-Muller codes in two variables were presented, and simple and low complexity algorithm to determine RGHWs of general $q$-ary Reed-Muller codes were given by extension of Heijnen and Pellikaan method~\cite{HP98}.

In this paper, we consider a special class of cyclic codes and give two formulae for their RGHWs.
Section~\ref{sec2} reviews the definition of RGHW and introduces the cyclic codes we investigate. The one-to-one correspondence between
the subspaces of the cyclic code and the subspaces of the direct product of two big fields is established. Equipped with this correspondence,
one inner product on the direct product space of two big fields is defined to transfer the RGHW problem to a counting problem on the dual space. Hence, we obtain
our first formula of RGHWs. By employing exponential sums, the common method in computing the weight distributions of cyclic codes, we get the second
formula of RGHWs. As applications of the two formulae, explicit examples are calculated, and explicit formulae are given.

\section{Relative Generalized Hamming Weights of Cyclic Codes}\label{sec2}
In this section, we give two formulae for RGHWs of cyclic codes of two nonzeros respect to one of its irreducible
cyclic subcodes.

We first introduce some notations valid for the whole paper.
\begin{itemize}
  \item Let $\f{q}$ be the finite field of $q$ elements with characteristic
$p$. Denote an extension field of $\f{q}$ by $\f{Q}$, and the trace map from $\f{Q}$
to $\f{q}$ by ${T_q^Q}$. Precisely, for any $a\in\f{Q}=\f{q^m}$, the trace ${T_q^Q}$ of $a$ is ${T_q^Q}(a)=a+a^q+\cdots+a^{q^{m-1}}$.
  \item Let $V\subset \f{q}^N$ be a $k$-dimensional linear space over $\f{q}$.
  We also call $V$ an $[N,k]$ linear code over $\f{q}$. For any $1\leq j \leq k$, denote
  by $\gbinom{V}{j}$ the set of all $j$-dimensional linear subspaces of $V$.
  \item For any $D\in \gbinom{V}{j}$, define the support of $D$ to be the set of locations at which
  all the vectors in $D$ are zeros. Denote by $\mathrm{Supp}(D)$ the support of $D$. That is,
   \[
      \mathrm{Supp}(D)=\{i\,:\, 1\leq i\leq N,\, v_i\neq 0\,\textrm{ for some $v=(v_1,\cdots,v_N)\in D$}\}.
  \]
\end{itemize}

\begin{defi}
Let $C_1\subset C_2\subset \f{q}^N$ be two $\f{q}$ linear codes of dimension $k_1,k_2$, respectively. For
$1\leq j \leq k_2-k_1$, the $j$th RGHW of $C_2$
respect to its subcode $C_1$ is defined to be the minimum size of $\mathrm{Supp}(D)$ where $D$ runs through all $j$-dimensional linear subspaces
of $C_2$ among which everyone has intersection $\{0\}$ with $C_1$. Denote it by $M_j(C_2,C_1)$. That is,
\[
   M_j(C_2,C_1)=\min\left\{|\mathrm{Supp}(D)|\,:\, D\in \gbinom{C_2}{j},\, D\cap C_1=\{0\}\right\}.
\]
\end{defi}

Note that if the restriction $D\cap C_1=\{0\}$ is released in the definition of RGHWs, then it becomes the corresponding GHW $d_j$ of the code $C_2$.

Now we introduce the cyclic codes considered in this paper. Let $\f{Q}$ be a finite field with $Q=q^m$. Suppose $\alpha_1, \alpha_2$ are two elements in $\f{Q}^*$ which are not conjugate to each
other over $\f{q}$. Let $n_1, n_2$ be the orders of $\alpha_1$ and $\alpha_2$, respectively. Let $d=\gcd(n_1,n_2)$ and $n=\frac{n_1n_2}{d}$.
Suppose $\gamma_1$ and $\gamma_2$ are primitive elements of the fields $\f{q}(\alpha_1)=\f{Q_1}=\f{q^{k_1}}$ and $\f{q}(\alpha_2)=\f{Q_2}=\f{q^{k_2}}$,
respectively, then $\alpha_i=\gamma_i^{e_i}$ and $Q_i-1=e_in_i$ for $i=1,2$.

With the above setting, we have two irreducible cyclic codes
\[
   C_i=\{c(\beta_i)=(T_q^{Q_i}(\beta_i),T_q^{Q_i}(\beta_i\alpha_i),\cdots,T_q^{Q_i}(\beta_i\alpha^{n_i-1}))\,:\,\beta_i\in\f{Q_i}\}
\]
for $i=1,2$. It is easy to see that $C_i$ has parameters $[n_i,k_i]$, and by Delsarte's theorem~\cite{Del75}, the parity check polynomial $h_i(x)\in \f{q}[x]$ of $C_i$ is the minimal (also irreducible) polynomial of
$\alpha_i^{-1}$ over $\f{q}$, for $i=1,2$.
The Hamming weight distributions of irreducible cyclic codes and general cyclic codes
are studied extensively in the literature. Recently, the GHWs of irreducible cyclic codes are studied by the second author of this paper~\cite{YLFL} and the GHWs of more special cyclic codes are presented in~\cite{XLG15}.
Using the method in~\cite{YLFL}, we compute the RGHWs of the following two cyclic codes:
\begin{align}\label{code1}
     C=\{c(\beta_1,\beta_2)\,:\,\beta_1\in\f{Q_1}\,,\beta_2\in\f{Q_2}\},
\end{align}
where
\begin{align*}
     c(\beta_1,\beta_2)=(T_q^{Q_1}(\beta_1)+T_q^{Q_2}(\beta_2),T_q^{Q_1}(\beta_1\alpha_1)+T_q^{Q_2}(\beta_2\alpha_2),\cdots,\\
  T_q^{Q_1}(\beta_1\alpha^{n-1}+T_q^{Q_2}(\beta_2\alpha^{n-1})),
\end{align*}
and its subcode
\begin{align}\label{code2}
  C'=\{c(0,\beta_2)\,:\,\beta_2\in\f{Q_2}\}.
\end{align}
Note that the codes $C$ and $C'$ have parameters $[n,k_1+k_2]$ and $[n,k_2]$, respectively, and the codeword $c(0,\beta_2)\in C'$ is the repeation of the codeword $c(\beta_2)\in C_2$ by $n\over n_2$ times. Again by Delsarte' theorem, the cyclic code $C$ has parity check polynomial $h_1(x)h_2(x)$. Such a cyclic code $C$ is often called cyclic code with two nonzeros, since the parity check polynomial factors through two distinct irreducible polynomials.

\subsection{The First Formula}
We first characterize the condition that subspaces of $C$ have zero-intersection with $C'$, then give our first formula for $M_j(C,C')$.

From the definition of the cyclic code $C$, we have a canonical isomorphism of $\f{q}$-linear spaces:
\[
   \f{Q_1}\times\f{Q_2}\rightarrow C, \qquad (\beta_1, \beta_2)\mapsto c(\beta_1, \beta_2).
\]
So we get a one-one correspondence
\begin{equation}\label{11map}
    \varphi: \gbinom{C}{j}\rightarrow \gbinom{\f{Q_1}\times\f{Q_2}}{j}.
\end{equation}

For any $D\in\gbinom{C}{j}$, under this correspondence, we have
\[
   D\cap C'=\{0\}\quad\Leftrightarrow\quad \varphi(D)\cap (\{0\}\times\f{Q_2})=\{0\}.
\]
If we denote by $\pi_i$ the projection of $H=\varphi(D)$ at the $i$th coordinate:
\begin{align*}
    \pi_1: H\rightarrow \f{Q_1}, \qquad (\beta_1, \beta_2)\mapsto \beta_1;\\
    \pi_2: H\rightarrow \f{Q_2}, \qquad (\beta_1, \beta_2)\mapsto \beta_2,
\end{align*}
then
\[
  H\cap (\{0\}\times\f{Q_2})=\{0\}\quad\Leftrightarrow\quad \ker(\pi_1)=\{0\}\quad\Leftrightarrow\quad
  \mathrm{Im}(\pi_1)\in\gbinom{\f{Q_1}}{j}.
\]
Furthermore, an inner product on $\f{Q_1}\times\f{Q_2}$ is introduced as follows. For any $(x_1,y_1),(x_2,y_2)\in \f{Q_1}\times\f{Q_2}$,
the inner product between the two elements is defined to be
\[
   <(x_1,y_1),(x_2,y_2)>=T_q^{Q_1}(x_1x_2)+T_q^{Q_2}(y_1y_2)\in \f{q}.
\]
With respect to this inner product, the dual space of $H$ is defined to be
\[
    H^{\bot}=\{v\in\f{Q_1}\times\f{Q_2}\,:\,<v,w>=0,\,\forall w\in H\}\in \gbinom{\f{Q_1}\times\f{Q_2}}{k_1+k_2-j}.
\]
So the intersection property can be interpreted as following
\[
  H\cap (\{0\}\times\f{Q_2})=\{0\}\Leftrightarrow H^{\bot}+(\f{Q_1}\times\{0\})=\f{Q_1}\times\f{Q_2}\Leftrightarrow\pi_2(H^{\bot})=\f{Q_2}.
\]
Now, with the preparation above, the first formula is presented in the following theorem.
\begin{thm}\label{thm1}
Let $C$ and $C'$ be the cyclic codes given by~(\ref{code1})
and~(\ref{code2}), respectively. For $1\leq j\leq k_1$, we have
\[
   M_j(C,C')=n-N_j,
\]
where
\[
N_j=\max\left\{|H\cap <(\alpha_1,\alpha_2)>|\,:\, H\in\gbinom{\f{Q_1}\times\f{Q_2}}{k_1+k_2-j},\, \pi_2(H)=\f{Q_2}\right\},
\]
and $<(\alpha_1,\alpha_2)>$ denotes the subgroup $\{(\alpha_1^i,\alpha_2^i)\,:\, 1\leq i\leq n\}$ generated by $(\alpha_1,\alpha_2)$.
\end{thm}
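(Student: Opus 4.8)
The plan is to convert $|\mathrm{Supp}(D)|$ into a count of elements of the cyclic group $\langle(\alpha_1,\alpha_2)\rangle$ that lie in a suitable dual space, and then to read off the minimum from the bijections already set up above. The crucial elementary remark is that, for $0\le i\le n-1$, the $i$-th coordinate of the codeword $c(\beta_1,\beta_2)$ is exactly the value of the inner product just introduced:
\[
   T_q^{Q_1}(\beta_1\alpha_1^{\,i})+T_q^{Q_2}(\beta_2\alpha_2^{\,i})
   =\bigl\langle(\beta_1,\beta_2),(\alpha_1^{\,i},\alpha_2^{\,i})\bigr\rangle.
\]
I would first record that $(\alpha_1,\alpha_2)$ has order $\operatorname{lcm}(n_1,n_2)=n_1n_2/d=n$ in $\f{Q_1}^{*}\times\f{Q_2}^{*}$, so that $i\mapsto(\alpha_1^{\,i},\alpha_2^{\,i})$ is a bijection from $\{0,1,\dots,n-1\}$ onto the $n$-element subgroup $\langle(\alpha_1,\alpha_2)\rangle$; this uses the definitions of $n_i$ as the order of $\alpha_i$ and of $n=n_1n_2/d$.

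Now fix $D\in\gbinom{C}{j}$ and put $H=\varphi(D)$, with $H^{\bot}$ its dual with respect to the above inner product. By the displayed identity, an index $i$ is \emph{not} in $\mathrm{Supp}(D)$ precisely when $\langle(\beta_1,\beta_2),(\alpha_1^{\,i},\alpha_2^{\,i})\rangle=0$ for every $(\beta_1,\beta_2)\in H$, i.e.\ precisely when $(\alpha_1^{\,i},\alpha_2^{\,i})\in H^{\bot}$. Together with the bijection of the previous paragraph this yields
\[
   |\mathrm{Supp}(D)|=n-\bigl|H^{\bot}\cap\langle(\alpha_1,\alpha_2)\rangle\bigr|.
\]
Next I would invoke the equivalences recorded just before the theorem: under $\varphi$, the condition $D\cap C'=\{0\}$ is equivalent to $\pi_2(H^{\bot})=\f{Q_2}$, while $H\mapsto H^{\bot}$ is a bijection from $\gbinom{\f{Q_1}\times\f{Q_2}}{j}$ onto $\gbinom{\f{Q_1}\times\f{Q_2}}{k_1+k_2-j}$ (the form being nondegenerate). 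Hence, as $D$ runs over the $j$-dimensional subspaces of $C$ meeting $C'$ trivially, $H'=H^{\bot}$ runs exactly over the $(k_1+k_2-j)$-dimensional subspaces of $\f{Q_1}\times\f{Q_2}$ with $\pi_2(H')=\f{Q_2}$. Recalling the definition of $N_j$ and taking the minimum, we obtain
\[
   M_j(C,C')=\min_D|\mathrm{Supp}(D)|=n-\max_{H'}\bigl|H'\cap\langle(\alpha_1,\alpha_2)\rangle\bigr|=n-N_j,
\]
as claimed. Along the way one notes that such $H'$ exist iff $k_1+k_2-j\ge k_2$, i.e.\ $j\le k_1$, matching the stated range of $j$.

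I expect essentially all the substance to lie in the first step: recognizing the codeword entry as an inner-product value and verifying that $(\alpha_1,\alpha_2)$ has order exactly $n$, so that distinct coordinate positions are never identified with the same group element. Once that is in place, Theorem~\ref{thm1} follows formally from the isomorphism $\varphi$ and the duality set-up preceding its statement; the genuinely hard work — the evaluation of $N_j$ — is postponed to the subsequent sections.
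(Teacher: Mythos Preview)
Your proposal is correct and follows essentially the same approach as the paper: both proofs rewrite the $i$-th coordinate as the inner product $\langle(\beta_1,\beta_2),(\alpha_1^i,\alpha_2^i)\rangle$, deduce that $n-|\mathrm{Supp}(D)|$ counts the elements $(\alpha_1^i,\alpha_2^i)\in\varphi(D)^{\bot}$, and then invoke the pre-theorem equivalence $D\cap C'=\{0\}\Leftrightarrow\pi_2(\varphi(D)^{\bot})=\f{Q_2}$ together with the duality bijection to pass from $j$-spaces to $(k_1+k_2-j)$-spaces. Your version is in fact slightly more explicit than the paper's, since you spell out that $(\alpha_1,\alpha_2)$ has order exactly $n$ (so indices and group elements are in bijection) and that the nondegeneracy of the trace form makes $H\mapsto H^{\bot}$ a bijection---steps the paper leaves to the phrase ``together with the discussion previous of the theorem.''
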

\begin{proof}
By definition, we have
\begin{align*}
     M_j(C,C')&=\min\left\{|\mathrm{Supp}(D)|\,:\, D\in \gbinom{C}{j},\, D\cap C'=\{0\}\right\}\\
                 &=n-\max\left\{N_j(D)\,:\, D\in \gbinom{C}{j},\, D\cap C'=\{0\}\right\},
\end{align*}
where
\begin{align*}
    N_j(D)&=|\{i\,:\, 0\leq i\leq n-1,\, T_q^{Q_1}(\beta_1\alpha_1^i)+T_q^{Q_2}(\beta_2\alpha_2^i)=0,\, \forall (\beta_1,\beta_2)\in D \}|\\
          &=|\{i\,:\, 0\leq i\leq n-1,\, <(\beta_1,\beta_2),(\alpha_1^i,\alpha_2^i)>=0,\, \forall (\beta_1,\beta_2)\in \varphi(D) \}|\\
          &=|\{i\,:\, 0\leq i\leq n-1,\, (\alpha_1^i,\alpha_2^i)\in \varphi(D)^{\bot} \}|.
\end{align*}
Together with the discussion previous of the theorem, one can easily obtain the theorem.

\end{proof}

\subsection{Applications of the First Formula}

As one application, taking $q=2, Q_1=2^{k_1}, Q_2=2^{k_2},
e_1=e_2=1$ such that $\gcd(k_1, k_2)=1, k_1, k_2\geq 2$,
$\alpha_i=\gamma_i$ is the primitive element of $\f{Q_i}$ for
$i=1,2$. In this case, $C_1, C_2$ are cyclic codes from
$m$-sequences of degree $k_1, k_2$, respectively. Since $\gcd(k_1,
k_2)=1$, it follows that $\gcd(n_1,n_2)=\gcd(Q_1-1,Q_2-1)=1$. So the length
$n=n_1n_2$ and the group generated by $(\alpha_1,\alpha_2)$
\[
   <(\alpha_1,\alpha_2)>=<\alpha_1>\times<\alpha_2>=\f{Q_1}^*\times\f{Q_2}^*.
\]
Hence, for any $H\in\gbinom{\f{Q_1}\times\f{Q_2}}{k_1+k_2-j}\,(1\leq j\leq k_1)$, we have
\begin{align*}
   &|H\cap <(\alpha_1,\alpha_2)>|\\
   =&|H\cap (\f{Q_1}^*\times\f{Q_2}^*)|\\
   =&|\{(\beta_1,\beta_2)\in H\,:\, \beta_1\beta_2\neq 0\}|\\
   =&q^{k_1+k_2-j}-|\pi_1^{-1}(0)|-|\pi_2^{-1}(0)|+1.
\end{align*}
Taking account of the property $\pi_2(H)=\f{Q_2}$, $\pi_2^{-1}(0)$ is an $\f{q}$-linear subspace of $H$ which has dimension
\[
  (k_1+k_2-j)-k_2=k_1-j
\]
from the exact sequence of $\f{q}$-linear spaces
\[
   \{0\}\rightarrow \pi_2^{-1}(0)\rightarrow H\rightarrow \pi_2(H)\rightarrow \{0\}.
\]
So
\[
  |H\cap <(\alpha_1,\alpha_2)>|=q^{k_1+k_2-j}-q^{k_1-j}-|\pi_1^{-1}(0)|+1.
\]
By the same reason, $\dim_{\f{q}}(\pi_1^{-1}(0))=k_1+k_2-j-
   \dim_{\f{q}}(\pi_1(H))\geq k_2-j$.
On the other hand,
$
 \dim_{\f{q}}(\pi_1^{-1}(0))\leq
 \dim_{\f{q}}(\{0\}\times\f{Q_2})=k_2$. So,
 \[
  k_2-j\leq \dim_{\f{q}}(\pi_1^{-1}(0))\leq k_2,
 \]
 and hence
 \[
 q^{k_1+k_2-j}-q^{k_1-j}-q^{k_2}+1\leq |H\cap <(\alpha_1,\alpha_2)>|\leq
 q^{k_1+k_2-j}-q^{k_1-j}-q^{k_2-j}+1.
 \]

 Let $H_1$ be the linear space spanning of $v_i=(\alpha_1^i,\alpha_2^i), 0\leq i\leq
 k_1-1$, over $\f{q}$. Since $\alpha_1^i, 0\leq i\leq
 k_1-1$, are linearly independent over $\f{q}$, we have
 $$\dim_{\f{q}}(H_1)=k-1.$$
 It is easy to see that the projection $\pi_1: H_1\rightarrow
 \f{Q_1}$ has kernel $\ker(\pi_1)=\{0\}$.

When $k_2\geq k_1$, the projection $\pi_2(H_1)$ is
$k_1$-dimensional $\f{q}$-linear subspace of $\f{Q_2}$ generated
by $\alpha_2^i, 0\leq i\leq k_1-1$. For $1\leq j\leq k_1$,
$m=k_1+k_2-j\geq k_2$, so there is some $(m-k_1)$-dimensional
$\f{q}$-linear subspace $V$ of $\f{Q_2}$ such that
\[
   V+\pi_2(H_1)=\f{Q_2}.
\]
Consider the subspace $H=H_1+(0,V)$ of $\f{Q_1}\times\f{Q_2}$. It
follows from $\ker(\pi_1)=\{0\}$ that
$$H_1\cap(0,V)={0},$$
so the sum in $H$ is a direct sum. Then
\[
  \dim_{\f{q}}(H)=k_1+(m-k_1)=m.
\]
That is, $H\in\gbinom{\f{Q_1}\times\f{Q_2}}{k_1+k_2-j}$,
$\pi_2(H)=V+\pi_2(H_1)=\f{Q_2}$, and $\dim_{\f{q}}(\pi_1^{-1}(0))$
achieves the minimum $m-k_1=k_2-j$. So
\[
  N_j=q^{k_1+k_2-j}-q^{k_1-j}-q^{k_2-j}+1.
\]

When $1\leq j\leq k_2<k_1$, the same construction as above can
show that $N_j$ has the same formula
\[
  N_j=q^{k_1+k_2-j}-q^{k_1-j}-q^{k_2-j}+1.
\]

When $k_2<j\leq k_1$, we have $k_2\leq m=k_1+k_2-j< k_1$.
Taking $H$ to be the $\f{q}$-linear space spanning by
$(\alpha_1^i,\alpha_2^i), 0\leq i\leq m-1$, as $m\leq k_1$, we
know $\dim_{\f{q}}(H)=m$. From $m\geq k_2$, it follows that
$\pi_2(H)=\f{Q_2}$. In this case, $\pi_1^{-1}(0)=\{0\}$ achieves
the minimal dimension. So
\[
  N_j=q^{k_1+k_2-j}-q^{k_1-j}.
\]

In conclusion, we have the following corollary:
\begin{cor}
 Let $\alpha_1,\alpha_2$ be the primitive elements of $\f{Q_1}$ and
 $\f{Q_2}$, respectively, where
$Q_1=2^{k_1}, Q_2=2^{k_2}, \gcd(k_1, k_2)=1$. Let $C, C'$ be the
cyclic codes defined by~(\ref{code1}) and~(\ref{code2}),
respectively. For $1\leq j\leq k_1$, we have
\[
   M_j(C,C')=(Q_1-1)(Q_2-1)-N_j,
\]
where if $k_1\leq k_2$, then
\[
N_j=q^{k_1+k_2-j}-q^{k_1-j}-q^{k_2-j}+1;
\]
if $k_2< k_1$, then
\begin{equation*}
   N_j=\left\{%
\begin{array}{ll}
    q^{k_1+k_2-j}-q^{k_1-j}-q^{k_2-j}+1, & \hbox{if $1\leq j\leq k_2$;} \\
   q^{k_1+k_2-j}-q^{k_1-j}, & \hbox{if $k_2<j\leq k_1$.} \\
\end{array}%
\right.
\end{equation*}
\end{cor}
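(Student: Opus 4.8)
The plan is to specialize Theorem~\ref{thm1} to the parameters at hand and then evaluate $N_j$ explicitly; the corollary is essentially the packaging of the computation sketched in the paragraph just before it. First I would record the two elementary facts that drive the specialization. Since $e_1=e_2=1$ we have $n_i=Q_i-1$, and since $\gcd(k_1,k_2)=1$ we get $d=\gcd(2^{k_1}-1,2^{k_2}-1)=2^{\gcd(k_1,k_2)}-1=1$, so $n=n_1n_2=(Q_1-1)(Q_2-1)$. Moreover $(\alpha_1,\alpha_2)$ has order $\mathrm{lcm}(n_1,n_2)=n_1n_2=|\f{Q_1}^*\times\f{Q_2}^*|$, so $<(\alpha_1,\alpha_2)>=\f{Q_1}^*\times\f{Q_2}^*$. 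Theorem~\ref{thm1} then reads $M_j(C,C')=(Q_1-1)(Q_2-1)-N_j$, and the whole task is to pin down $N_j$.

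Next I would fix $m:=k_1+k_2-j$ and, for an arbitrary $H\in\gbinom{\f{Q_1}\times\f{Q_2}}{m}$ with $\pi_2(H)=\f{Q_2}$, compute $|H\cap(\f{Q_1}^*\times\f{Q_2}^*)|$ by inclusion--exclusion on the two coordinate-zero subspaces, obtaining $q^{m}-|\pi_1^{-1}(0)|-|\pi_2^{-1}(0)|+1$ (the formula holds even when one of the kernels is trivial, since then $q^{0}=1$). Surjectivity of $\pi_2$ and the exact sequence $\{0\}\to\pi_2^{-1}(0)\to H\to\pi_2(H)\to\{0\}$ force $\dim_{\f{q}}\pi_2^{-1}(0)=m-k_2=k_1-j$, a quantity fixed by $j$. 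Hence maximizing $|H\cap(\f{Q_1}^*\times\f{Q_2}^*)|$ over the admissible $H$ amounts to minimizing $t:=\dim_{\f{q}}\pi_1^{-1}(0)$. From $t=m-\dim_{\f{q}}\pi_1(H)\ge m-k_1=k_2-j$ together with $t\ge 0$ one reads off the lower bound $t\ge\max\{0,k_2-j\}$, and the two branches of the corollary correspond exactly to the sign of $k_2-j$; when $k_1\le k_2$ every $j\le k_1$ has $k_2-j\ge 0$, which is why only the first branch appears.

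The substantive part — and the step I expect to be the main obstacle — is exhibiting, in each regime, an admissible $H$ attaining the lower bound on $t$, since one must simultaneously control $\dim_{\f q}H=m$, the surjectivity $\pi_2(H)=\f{Q_2}$, and the value $\dim_{\f q}\pi_1^{-1}(0)=\max\{0,k_2-j\}$. I would build everything from $H_1:=\mathrm{span}_{\f q}\{(\alpha_1^i,\alpha_2^i):0\le i\le k_1-1\}$; because $\alpha_1$ is primitive, $1,\alpha_1,\dots,\alpha_1^{k_1-1}$ is an $\f q$-basis of $\f{Q_1}$, so $\dim_{\f q}H_1=k_1$ and $\pi_1|_{H_1}$ is an isomorphism onto $\f{Q_1}$, in particular $H_1\cap(\{0\}\times\f{Q_2})=\{0\}$. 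For $1\le j\le k_2$ (so $m\ge k_1$) I would choose an $\f q$-subspace $V\subseteq\f{Q_2}$ of dimension $m-k_1=k_2-j$ with $V+\pi_2(H_1)=\f{Q_2}$ — possible since $m\ge k_2$ and $\dim_{\f q}\pi_2(H_1)=\min\{k_1,k_2\}$ — and set $H:=H_1\oplus(\{0\}\times V)$; the sum is direct by injectivity of $\pi_1|_{H_1}$, giving $\dim_{\f q}H=m$, $\pi_1^{-1}(0)=\{0\}\times V$ of dimension $k_2-j$, and $\pi_2(H)=\pi_2(H_1)+V=\f{Q_2}$. For $k_2<j\le k_1$ (so $k_2\le m<k_1$) I would instead take $H:=\mathrm{span}_{\f q}\{(\alpha_1^i,\alpha_2^i):0\le i\le m-1\}$; since $m\le k_1$ the first coordinates are $\f q$-independent, so $\dim_{\f q}H=m$ and $\pi_1^{-1}(0)=\{0\}$, while $m\ge k_2$ and primitivity of $\alpha_2$ give $\pi_2(H)=\f{Q_2}$. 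Substituting the optimal $t$ into $N_j=q^{m}-q^{t}-q^{k_1-j}+1$ yields $N_j=q^{k_1+k_2-j}-q^{k_1-j}-q^{k_2-j}+1$ in the first regime and $N_j=q^{k_1+k_2-j}-q^{k_1-j}$ in the second, which is exactly the asserted formula. The remaining work is the bookkeeping of dimension counts, the directness of the sums, and the surjectivity of $\pi_2$ in each construction; it needs care but no further idea.
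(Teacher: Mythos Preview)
Your proposal is correct and follows essentially the same route as the paper: specialize Theorem~\ref{thm1} using $\langle(\alpha_1,\alpha_2)\rangle=\f{Q_1}^*\times\f{Q_2}^*$, reduce via inclusion--exclusion and the exact sequence for $\pi_2$ to minimizing $\dim_{\f q}\pi_1^{-1}(0)$, and realize the minimum with the very same constructions $H_1=\mathrm{span}\{(\alpha_1^i,\alpha_2^i)\}$ plus a complement $(0,V)$ in the first regime and the truncated span in the second. Your organization is in fact slightly cleaner than the paper's, merging the cases $k_1\le k_2$ and $1\le j\le k_2<k_1$ into the single regime $j\le k_2$ via the observation $\dim_{\f q}\pi_2(H_1)=\min\{k_1,k_2\}$.
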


\subsection{The Second Formula}
The study of exponential sums, no matter the estimations or the
exact values, is one important topic in number theory. Exponential
sums, as a tool, are wildly used in coding theory, such as
the proof of MacWilliams identity, computing the minimum distance
and weight distributions of codes, especially those of cyclic
codes, etc. In this subsection, the exponential sums are used to
present a formula for the RGHWs of
the cyclic codes we consider in this paper.

We first review the Gauss sums over finite fields. A character of the abelian
group $(G,+)$ is a group homomorphism from $G$ to the multiplicative group of nonzero complex numbers $\mathbb{C}^*$.
Additive/multiplicative characters of a finite field $\f{q}$ are characters of the field considered
as the additive group $(\f{q},+)$ or multiplicative group $(\f{q}^*,\cdot)$, respectively. For
any multiplicative character $\chi$ of $\f{q}$ and any $\beta\in\f{q}$, the Gauss sum of $\chi$ and $\beta$
is defined to be the sum
\[
    G_q(\chi;\beta)=\sum_{x\in\f{q}^*}\chi(x)\zeta_{p}^{T^q_p(\beta x)},
\]
where $\zeta_l=e^{2\pi\sqrt{-1}/l}$ is the primitive $l$th root of unity, for any positive integer $l$.
We always denote
\[
   G_q(\chi)=G_q(\chi;1).
\]
In particular, we have the following relationship
\begin{equation}\label{relation}
     G_q(\chi;\beta)=\bar{\chi}(\beta)G_q(\chi)
\end{equation}
where $\bar{\chi}$ is the conjugation of $\chi$ defined by $\bar{\chi}(x)=\overline{\chi(x)}$ for any $x\in\f{q}$.
Also note that
$$ G_q(\chi;0)=\sum_{x\in\f{q}^*}\chi(x)=\left\{
                                           \begin{array}{ll}
                                             q-1, & \hbox{$\chi=1$;} \\
                                             0, & \hbox{otherwise.}
                                           \end{array}
                                         \right.
$$
is the complete sum of the character $\chi$. In general, we have the following character sum over finite abelian groups which
can be viewed as the duality property.
\begin{lem}\label{lem1}
Let $\theta$ be a primitive element of $\f{q}$. Let $\chi$ be the multiplicative character of $\f{q}$ defined by $\chi(\theta)=\zeta_e$.
For $\alpha=\theta^e$, and any $x\in\f{q}^*$, we have
\[
    \sum_{\lambda=0}^{e-1}\chi^{\lambda}(x)=\left\{
                                              \begin{array}{ll}
                                                e, & \hbox{if $x\in <\alpha>$;} \\
                                                0, & \hbox{otherwise,}
                                              \end{array}
                                            \right.
\]
where $<\alpha>=\{1,\alpha,\alpha^2,\cdots,\alpha^{e-1}\}$ is the subgroup generated by $\alpha$.
\end{lem}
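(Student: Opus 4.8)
The plan is to reduce the sum to an elementary geometric series by exploiting the cyclic structure of $\f{q}^*$. First I would record the implicit hypothesis: the very definition $\chi(\theta)=\zeta_e$ forces $e\mid q-1$, since $\chi(\theta)^{q-1}=\chi(\theta^{q-1})=\chi(1)=1$ compels $\zeta_e^{q-1}=1$. With $e\mid q-1$ in hand, $\chi$ is a well-defined multiplicative character of $\f{q}$ of order exactly $e$, and $<\alpha>\,=\,<\theta^e>$ is precisely the group of $e$-th powers in $\f{q}^*$, a subgroup of index $e$; in particular $\chi$ is trivial on $<\alpha>$.

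Next, given $x\in\f{q}^*$, write $x=\theta^a$ with $0\le a\le q-2$. Then $\chi^{\lambda}(x)=\chi(\theta)^{a\lambda}=\zeta_e^{a\lambda}$, so
\[
   \sum_{\lambda=0}^{e-1}\chi^{\lambda}(x)=\sum_{\lambda=0}^{e-1}\zeta_e^{a\lambda}.
\]
If $e\mid a$, every summand equals $1$ and the total is $e$. If $e\nmid a$, then $\zeta_e^{a}\neq 1$, and summing the finite geometric series gives $(\zeta_e^{ae}-1)/(\zeta_e^{a}-1)=0$, because $\zeta_e^{ae}=(\zeta_e^{e})^{a}=1$. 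Finally I would observe that the condition $e\mid a$ is equivalent to $\theta^{a}\in\,<\theta^e>\,=\,<\alpha>$, which delivers exactly the claimed dichotomy.

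A conceptual repackaging of the same computation, which I might include as a remark, is this: since $\chi$ is trivial on $<\alpha>$ it descends to a character of the quotient group $\f{q}^*/<\alpha>$, which is cyclic of order $e$ generated by the image of $\theta$; the powers $\chi^{0},\chi^{1},\dots,\chi^{e-1}$ are then precisely all $e$ characters of this quotient, and the standard orthogonality relation for characters of a finite abelian group — the sum of all characters evaluated at a point equals the group order if the point is the identity and $0$ otherwise — yields the lemma immediately.

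I do not anticipate a genuine obstacle here. The only points requiring a moment's care are making the implicit divisibility $e\mid q-1$ explicit (without it neither $\chi$ nor the statement is meaningful) and the bookkeeping that identifies the exponent condition $e\mid a$ with membership in $<\alpha>$; the remainder is the classical evaluation of a sum of roots of unity.
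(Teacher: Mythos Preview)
Your proof is correct. The paper does not actually supply a proof of this lemma: it introduces the statement as ``the following character sum over finite abelian groups which can be viewed as the duality property'' and then moves on, treating it as a standard fact. Your direct geometric-series computation is the expected elementary argument, and your closing remark---that the $\chi^\lambda$ exhaust the character group of $\f{q}^*/\langle\alpha\rangle$ so that orthogonality applies---is exactly the ``duality'' viewpoint the paper is alluding to.
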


From the proof of Theorem~\ref{thm1}, in order to compute
$M_j(C,C')$, it is enough to compute
\[
    N_j(D)=|\{i\,:\, 0\leq i\leq n-1,\, T_q^{Q_1}(\beta_1\alpha_1^i)+T_q^{Q_2}(\beta_2\alpha_2^i)=0,\, \forall (\beta_1,\beta_2)\in D \}|\\
\]
then take the maximum of $N_j(D)$ where $D$ runs over
$\gbinom{C}{j}$ with $D\cap C'=\{0\}$.

For $1\leq j\leq k_1$, any $D\in \gbinom{C}{j}$ with $D\cap
C'=\{0\}$, take a basis
$\{(\beta_1^{(\lambda)},\beta_2^{(\lambda)})\,:\,1\leq \lambda\leq
j\}$ for $H=\varphi(D)$ over $\f{q}$. Then
\begin{align*}
    N_j(D)=&|\{i\,:\, 0\leq i\leq n-1,\, T_q^{Q_1}(\beta_1\alpha_1^i)+T_q^{Q_2}(\beta_2\alpha_2^i)=0,\, \forall (\beta_1,\beta_2)\in H \}|\\
          =&|\{i\,:\, 0\leq i\leq n-1,\, T_q^{Q_1}(\beta_1^{(\lambda)}\alpha_1^i)+T_q^{Q_2}(\beta_2^{(\lambda)}\alpha_2^i)=0,\, \forall 1\leq \lambda\leq j \}|\\
          =&\frac{1}{q^j}\sum_{i=0}^{n-1}\prod_{\lambda=1}^{j}\sum_{x_t\in\f{q}}\zeta_p^{T_p^{q}(x_t(T_q^{Q_1}(\beta_1^{(\lambda)}\alpha_1^i)+T_q^{Q_2}(\beta_2^{(\lambda)}\alpha_2^i)))}\\
          =&\frac{1}{q^j}\sum_{i=0}^{n-1}\sum_{x_1,\cdots,x_j\in\f{q}}\zeta_p^{T_p^{Q_1}(\alpha_1^i(x_1\beta_1^{(1)}+\cdots+x_j\beta_1^{(j)}))+T_p^{Q_2}(\alpha_2^i(x_1\beta_2^{(1)}+\cdots+x_j\beta_2^{(j)}))}\\
          =&\frac{1}{q^j}\sum_{i=0}^{n-1}\sum_{(\beta_1,\beta_2)\in H}\zeta_p^{T_p^{Q_1}(\beta_1\alpha_1^i)+T_p^{Q_2}(\beta_2\alpha_2^i)}\\
          =&\frac{n}{q^j}+\frac{1}{q^j}(A+B),
\end{align*}
where
\begin{align*}
         A&=\sum_{i=0}^{n-1}\sum_{(\beta_1,0)\in H,\beta_1\neq 0}\zeta_p^{T_p^{Q_1}(\beta_1\alpha_1^i)}\\
          &=\frac{n}{n_1}\sum_{(\beta_1,0)\in H,\beta_1\neq 0}\sum_{x\in <\alpha_1>}\zeta_p^{T_p^{Q_1}(\beta_1x)}\\
          &\xlongequal{(1)}\frac{n}{e_1n_1}\sum_{(\beta_1,0)\in H,\beta_1\neq 0}\sum_{x\in \f{Q_1}^*}\zeta_p^{T_p^{Q_1}(\beta_1x)}\sum_{\lambda=0}^{e_1-1}\chi^{\lambda}(x)\\
          &\xlongequal{(2)}\frac{n}{e_1n_1}\sum_{(\beta_1,0)\in H,\beta_1\neq 0}\sum_{\lambda=0}^{e_1-1}\bar{\chi}^{\lambda}(\beta_1)G_{Q_1}(\chi^{\lambda})\\
          &\xlongequal{(3)}\frac{n}{e_1n_1}\sum_{\lambda=0,\chi^{\lambda}(\f{q}^*)=1}^{e_1-1}G_{Q_1}(\chi^{\lambda})\sum_{(\beta_1,0)\in H,\beta_1\neq 0}\bar{\chi}^{\lambda}(\beta_1),
\end{align*}
where $\chi$ is the multiplicative character of $\f{Q_1}$ such that $\chi(\gamma)=\zeta_{e_1}$;
and
\[
B=\sum_{(\beta_1,\beta_2)\in H,\beta_1\beta_2\neq 0}\sum_{i=0}^{n-1}\zeta_p^{T_p^{Q_1}(\beta_1\alpha_1^i)+T_p^{Q_2}(\beta_2\alpha_2^i)}.
\]
The equality (1) follows from the relationship~(\ref{relation}). The equality (2) follows from Lemma~\ref{lem1}. The equality (3) follows from
the property~\footnote{The proof is very easy. Take any $\theta\in\f{q}$ such that $\psi(\theta)\neq 1$, then
\[
  \psi(\theta)\sum_{x\in H}\psi(x)=\sum_{x\in H}\psi(\theta x)=\sum_{x\in\theta H}\psi(x)=\sum_{x\in H}\psi(x)
\]
which implies $\sum_{x\in H}\psi(x)=0$.
}: if $\psi$ is a multiplicative character of $\f{Q}$ which is non-trivial over the subfield $\f{q}$, and $H$ is an $\f{q}$-linear subspace of $\f{Q}$, then the imcomplete sum $\sum_{x\in H}\psi(x)=0$.

So by using the above expression of $N_j(D)$, we give another formula for the RGHW $M_j(C,C')$.

 \begin{flushleft}
 \textbf{Discussion of $A$ and $B$.}
\end{flushleft}

We now return to the property $\chi^{\lambda}(\f{q}^*)=1\,(0\leq \lambda\leq e_1-1)$. Since $\gamma_1^{\frac{Q_1-1}{q-1}}$ is a primitive element of $\f{q}$, the property $\chi^{\lambda}(\f{q}^*)=1$ is equivalent to
 $$1=\chi^{\lambda}(\gamma_1^{\frac{Q_1-1}{q-1}})=\zeta_{e_1}^{\lambda\frac{Q_1-1}{q-1}},$$
 which is also equivalent to
\[
  e_1\mid\lambda\frac{Q_1-1}{q-1}, \textrm{i.e.,}\,\frac{e_1}{e'_1}\mid\lambda,\,\textrm{where}\, e'_1=\gcd(e_1,\frac{Q_1-1}{q-1}).
\]
So $\lambda$ has the form $\lambda=\frac{e_1}{e'_1}\tau,\,0\leq \tau\leq e'_1-1$. Denote $\psi=\chi^{{e_1}/{e'_1}}$, then $\psi(\gamma_1)=\zeta_{e'_1}$ and
\begin{align*}
         A=&\frac{n}{e_1n_1}\sum_{\lambda=0,\chi^{\lambda}(\f{q}^*)=1}^{e_1-1}G_{Q_1}(\chi^{\lambda})\sum_{(\beta_1,0)\in H,\beta_1\neq 0}\bar{\chi}^{\lambda}(\beta_1)\\
         =&\frac{n}{e_1n_1}\sum_{\tau=0}^{e'_1-1}G_{Q_1}(\psi^{\tau})\sum_{(\beta_1,0)\in H,\beta_1\neq 0}\bar{\psi}^{\tau}(\beta_1)\\
         =&-\frac{n|(\f{Q_1}^*,0)\cap H|}{e_1n_1}+\frac{n}{e_1n_1}\sum_{\tau=1}^{e'_1-1}G_{Q_1}(\psi^{\tau})\sum_{(\beta_1,0)\in H,\beta_1\neq 0}\bar{\psi}^{\tau}(\beta_1).
\end{align*}
In particular, if $e'_1=1$, then
\[
  A=-\frac{n|(\f{Q_1}^*,0)\cap H|}{Q_1-1}.
\]

We now study the exponential sum $B$ provided that $\gcd(n_1,n_2)=1$. In this case, the length $n$ of the code $C$ equals $n_1n_2$. For any $0\leq i\leq n-1$, there is a unique pair $(i_1,i_2)$ such that $0\leq i_1\leq n_1-1$, $0\leq i_2\leq n_2-1$ and
\[
  i=i_1n_1+i_2n_2\qquad\mod n.
\]
We can compute
\begin{align*}
         B=&\sum_{(\beta_1,\beta_2)\in H,\beta_1\beta_2\neq 0}\sum_{i=0}^{n-1}\zeta_p^{T_p^{Q_1}(\beta_1\alpha_1^i)+T_p^{Q_2}(\beta_2\alpha_2^i)}\\
         =&\sum_{(\beta_1,\beta_2)\in H,\beta_1\beta_2\neq 0}\left(\sum_{i_1=0}^{n_1-1}\zeta_p^{T_p^{Q_1}(\beta_1\alpha_1^{n_2i_1})}\right)\left(\sum_{i_2=0}^{n_2-1}\zeta_p^{T_p^{Q_2}(\beta_2\alpha_2^{n_1i_2})}\right)\\
          =&\sum_{(\beta_1,\beta_2)\in H,\beta_1\beta_2\neq 0}\left(\sum_{x_1\in<\alpha_1>}\zeta_p^{T_p^{Q_1}(\beta_1x_1)}\right)\left(\sum_{x_2\in<\alpha_2>}\zeta_p^{T_p^{Q_2}(\beta_2x_2)}\right)\\
          =&\frac{1}{e_1e_2}\sum_{(\beta_1,\beta_2)\in H,\beta_1\beta_2\neq 0}\sum_{\lambda_1=0}^{e_1-1}\sum_{\lambda_2=0}^{e_2-1}G_{Q_1}(\chi_1^{\lambda_1})G_{Q_2}(\chi_2^{\lambda_2})\bar{\chi_1}^{\lambda_1}(\beta_1)
          \bar{\chi_2}^{\lambda_2}(\beta_2)\\
          =&\frac{1}{e_1e_2}\sum_{{{0\leq \lambda_1\leq e_1-1,}\atop {0\leq \lambda_2\leq e_2-1,}}\atop {\chi_1^{\lambda_1}\chi_2^{\lambda_2}(\f{q}^*)=1}}G_{Q_1}(\chi_1^{\lambda_1})G_{Q_2}(\chi_2^{\lambda_2})\sum_{(\beta_1,\beta_2)\in H,\beta_1\beta_2\neq 0}\bar{\chi_1}^{\lambda_1}(\beta_1)\bar{\chi_2}^{\lambda_2}(\beta_2)
\end{align*}
where $\chi_i$ are the characters of $\f{Q_i}^*$ such that $\chi_i(\gamma_i)=\zeta_{e_i}$, for $i=1,2$. The condition $\chi_1^{\lambda_1}\chi_2^{\lambda_2}(\f{q}^*)=1$ is equivalent to
\[
   \chi_1^{\lambda_1}\chi_2^{\lambda_2}(\delta)=1
\]
where $\delta=\gamma_1^{\frac{Q_1-1}{q-1}}=\gamma_2^{\frac{Q_2-1}{q-1}}$ is a primitive element~\footnote{In general, $\gamma_1^{\frac{Q_1-1}{q-1}}$ and $\gamma_2^{\frac{Q_2-1}{q-1}}$ may be differ from some element of $\f{q}$. But this is not essential. By choosing carefully $\gamma_1$ and $\gamma_2$ at the beginning, we can ensure $\gamma_1^{\frac{Q_1-1}{q-1}}=\gamma_2^{\frac{Q_2-1}{q-1}}$.} of $\f{q}$. So it is equivalent to
\[
  1= \chi_1^{\lambda_1}(\gamma_1^{\frac{Q_1-1}{q-1}})\chi_2^{\lambda_2}(\gamma_2^{\frac{Q_2-1}{q-1}})=\zeta_{e_1}^{\lambda_1\frac{Q_1-1}{q-1}}
  \zeta_{e_2}^{\lambda_2\frac{Q_2-1}{q-1}}
\]
which is also equivalent to
\[
   \lambda_1e_2\frac{Q_1-1}{q-1}+\lambda_2e_1\frac{Q_2-1}{q-1}\equiv 0 \qquad \mod e_1e_2.
\]

\subsection{Application of the Second Formula}
Take $e_1=1,\,e_2=q-1,\,Q_1=q^{k_1},\,Q_{2}=q^{k_2}$ such that $\gcd(k_1,k_2)=1$ and $2\nmid k_2$, then $e'_1=\gcd(e_1,\frac{Q_1-1}{q-1})=1$, $n_1=q^{k_1}-1$,
$n_2=\frac{Q_2-1}{q-1}$, and $\gcd(n_1,n_2)=1$. From $e'_1=1$, we know that
\[
  A=-\frac{n|(\f{Q_1}^*,0)\cap H|}{Q_1-1}=-\frac{(q^{k_2}-1)|(\f{Q_1}^*,0)\cap H|}{q-1}.
\]
Next, we determine the exact value of $B$:
\begin{align*}
    B=&\frac{1}{e_1e_2}\sum_{{{0\leq \lambda_1\leq e_1-1,}\atop {0\leq \lambda_2\leq e_2-1,}}\atop {\chi_1^{\lambda_1}\chi_2^{\lambda_2}(\f{q}^*)=1}}G_{Q_1}(\chi_1^{\lambda_1})G_{Q_2}(\chi_2^{\lambda_2})\sum_{(\beta_1,\beta_2)\in H,\beta_1\beta_2\neq 0}\bar{\chi_1}^{\lambda_1}(\beta_1)\bar{\chi_2}^{\lambda_2}(\beta_2)\\
    =&\frac{-1}{q-1}\sum_{{0\leq \lambda_2\leq q-2,}\atop {\chi_2^{\lambda_2}(\f{q}^*)=1}}G_{Q_2}(\chi_2^{\lambda_2})\sum_{(\beta_1,\beta_2)\in H,\beta_1\beta_2\neq 0}\bar{\chi_2}^{\lambda_2}(\beta_2).
\end{align*}
The condition $\chi_2^{\lambda_2}(\f{q}^*)=1$ has been discussed twice before. One can use the same argument to derive that it is equivalent to
\[
   q-1\mid \lambda_2\frac{Q_2-1}{q-1}
\]
By the assumption $2\nmid k_2$, we have $\gcd(q-1,\frac{Q_2-1}{q-1})=1$. So $\chi_2^{\lambda_2}(\f{q}^*)=1$ is equivalent to $q-1\mid \lambda_2$. But $0\leq \lambda_2\leq q-2$, so the condition $\chi_2^{\lambda_2}(\f{q}^*)=1$ holds if and only if $\lambda_2=0$. Hence, we get
\[
   B=\frac{1}{q-1}\sum_{(\beta_1,\beta_2)\in H,\beta_1\beta_2\neq 0}1=\frac{|(\f{Q_1}^*\times\f{Q_2}^*)\cap H|}{q-1}.
\]
In conclusion, for any $1\leq j\leq k_1$, any $D\in \gbinom{C}{j}$ with $D\cap
C'=\{0\}$, let $H=\varphi(D)\in \gbinom{\f{Q_1}\times\f{Q_2}}{j}$, then $(0,\f{Q_2})\cap H=\{(0,0)\}$ and
\begin{align*}
     N_j(D)=&\frac{n}{q^j}+\frac{1}{q^j}(A+B)\\
           =&\frac{n}{q^j}+\frac{1}{q^j}\left(-\frac{(q^{k_2}-1)|(\f{Q_1}^*,0)\cap H|}{q-1}+\frac{|(\f{Q_1}^*\times\f{Q_2}^*)\cap H|}{q-1}\right)\\
            =&\frac{n}{q^j}+\frac{1}{q^j}\left(-\frac{q^{k_2}|(\f{Q_1}^*,0)\cap H|}{q-1}+\frac{|(\f{Q_1}^*\times\f{Q_2})\cap H|}{q-1}\right)\\
             =&\frac{n}{q^j}+\frac{1}{q^j}\left(-\frac{q^{k_2}|(\f{Q_1},0)\cap H|}{q-1}+\frac{|(\f{Q_1}\times\f{Q_2})\cap H|}{q-1}+\frac{q^{k_2}-1}{q-1}\right)\\
              =&\frac{q^{k_1+k_2}-q^{k_1}+q^j-q^{k_2}|(\f{Q_1},0)\cap H|}{q^j(q-1)}.
\end{align*}
So in order to make $N_j(D)$ large, we need to find $H\in \gbinom{\f{Q_1}\times\f{Q_2}}{j}$ such that $(0,\f{Q_2})\cap H=\{(0,0)\}$, and $|(\f{Q_1},0)\cap H|$ is small.

When $k_1\leq k_2$, take $H$ as the $\f{q}$-spanning space of $\{(v_1,e_1),\cdots,(v_{j},e_j)\}$ where $v_1,\cdots,v_j$ form a partial basis of $\f{Q_1}$ and $e_1,\cdots,e_j$ form a partial basis of $\f{Q_2}$. Then $H\in \gbinom{\f{Q_1}\times\f{Q_2}}{j}$ satisfies $(0,\f{Q_2})\cap H=\{(0,0)\}$ and
\[
|(\f{Q_1},0)\cap H|=1
\]
achieves the minimum. So in this case, we have
\[
  M_j(C,C')=n-\frac{q^{k_1+k_2}-q^{k_1}+q^j-q^{k_2}}{q^j(q-1)}=n+\sum_{\iota=0}^{k_1-j-1}q^\iota-\sum_{\kappa=k_2-j}^{k_1+k_2-j-1}q^\kappa.
\]

When $j\leq k_2<k_1$, take the same space $H$ as above, then we obtain the same result
\[
  M_j(C,C')=n-\frac{q^{k_1+k_2}-q^{k_1}+q^j-q^{k_2}}{q^j(q-1)}=n+\sum_{\iota=0}^{k_2-j-1}q^\iota-\sum_{\kappa=k_1-j}^{k_1+k_2-j-1}q^\kappa.
\]

When $k_2<j\leq k_1$, the canonical isomorphisms of $\f{q}$-linear spaces
\[
  \frac{H}{(\f{Q_1},0)\cap H}\cong \frac{(\f{Q_1},0) + H}{(\f{Q_1},0)}\cong \pi_2(H)
\]
induce $j-\dim_{\f{q}}((\f{Q_1},0)\cap H)=\dim_{\f{q}}(\pi_2(H))\leq k_2$. So
\[
  \dim_{\f{q}}((\f{Q_1},0)\cap H)\geq j-k_2.
\]
On the other hand, we will show the equality is available for some $H$.
Let $m=k_1+k_2-j$, then $k_2\leq m< k_1$. Taking $H$ to be the $\f{q}$-linear space spanning by
$(\alpha_1^i,\alpha_2^i), 0\leq i\leq m-1$, as $m\leq k_1$, we
know $\dim_{\f{q}}(H)=m$ and $(0,\f{Q_2})\cap H=\{(0,0)\}$. In this case, $(\f{Q_1},0)\cap H$ achieves
the minimal dimension $j-k_2$. So
\[
  M_j(C,C')=n-\frac{q^{k_1+k_2}-q^{k_1}+q^j-q^{k_2}q^{j-k_2}}{q^j(q-1)}=n-q^{k_1-j}\sum_{\iota=0}^{k_2-1}q^\iota.
\]

From the discussion above, we have the following corollary as an application of our second formula:
\begin{cor}
 Let $\gamma_1$ and $\gamma_2$ be two primitive elements of $\f{Q_1}$ and
 $\f{Q_2}$, respectively, where $Q_1=q^{k_1}, Q_2=q^{k_2}, \gcd(k_1, k_2)=1$ and $2\nmid k_2$. Take $e_1=1, e_2=q-1$ and $\alpha_i=\gamma_i^{e_i}$ for $i=1,2$. Let $n_i$ be the order of $\alpha_i$, for $i=1,2$. Let $C, C'$ be the
cyclic codes defined by~(\ref{code1}) and~(\ref{code2}),
respectively. Then the length of $C$ and $C'$ is $n=n_1n_2=(Q_1-1)(Q_2-1)/(q-1)$. For $1\leq j\leq k_1$, the $j$th RGHW of $C$ respect to $C'$ is
\[
   M_j(C,C')=n-N_j,
\]
where if $k_1\leq k_2$, then
\[
N_j=\sum_{\kappa=k_2-j}^{k_1+k_2-j-1}q^\kappa-\sum_{\iota=0}^{k_1-j-1}q^\iota;
\]
if $k_2< k_1$, then
\begin{equation*}
   N_j=\left\{%
\begin{array}{ll}
   \sum_{\kappa=k_1-j}^{k_1+k_2-j-1}q^\kappa-\sum_{\iota=0}^{k_2-j-1}q^\iota, & \hbox{if $1\leq j\leq k_2$;} \\
   q^{k_1-j}\sum_{\iota=0}^{k_2-1}q^\iota, & \hbox{if $k_2<j\leq k_1$.} \\
\end{array}%
\right.
\end{equation*}
\end{cor}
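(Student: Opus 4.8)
The corollary is a specialization of the exponential-sum analysis of the previous subsection, so the plan is to (i) evaluate $A$ and $B$ in closed form for the chosen parameters, (ii) use this to reduce the maximization of $N_j(D)$ to a one-line linear-algebra extremal problem, and (iii) solve that problem by an explicit construction in each range of $j$. Recall from the proof of Theorem~\ref{thm1} and the character-sum expansion that for $1\leq j\leq k_1$ and any $D\in\gbinom{C}{j}$ with $D\cap C'=\{0\}$ one has $M_j(C,C')=n-\max_D N_j(D)$ with $N_j(D)=\frac{n}{q^j}+\frac{1}{q^j}(A+B)$, where $H=\varphi(D)\in\gbinom{\f{Q_1}\times\f{Q_2}}{j}$ satisfies $(0,\f{Q_2})\cap H=\{(0,0)\}$.

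For step (i): since $e_1=1$ we have $e'_1=\gcd(e_1,\frac{Q_1-1}{q-1})=1$, so the formula already obtained gives $A=-\frac{n\,|(\f{Q_1}^*,0)\cap H|}{Q_1-1}=-\frac{(q^{k_2}-1)\,|(\f{Q_1}^*,0)\cap H|}{q-1}$. For $B$, the summation condition $\chi_1^{\lambda_1}\chi_2^{\lambda_2}(\f{q}^*)=1$ forces $\lambda_1=0$ because $e_1=1$, and the residual condition $\chi_2^{\lambda_2}(\f{q}^*)=1$ is equivalent to $q-1\mid\lambda_2\frac{Q_2-1}{q-1}$; since $2\nmid k_2$ we have $\gcd(q-1,\frac{Q_2-1}{q-1})=1$, hence $q-1\mid\lambda_2$, and in the range $0\leq\lambda_2\leq q-2$ this means $\lambda_2=0$. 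Thus only the trivial term survives, $B=\frac{1}{q-1}\,|(\f{Q_1}^*\times\f{Q_2}^*)\cap H|$, and after substituting, folding the $\beta_1=0$ and $\beta_1\beta_2=0$ corrections into intersections with the full spaces, and simplifying, one gets
\[
   N_j(D)=\frac{q^{k_1+k_2}-q^{k_1}+q^j-q^{k_2}\,|(\f{Q_1},0)\cap H|}{q^j(q-1)}.
\]
Hence maximizing $N_j(D)$ is the same as minimizing $|(\f{Q_1},0)\cap H|$ over $H\in\gbinom{\f{Q_1}\times\f{Q_2}}{j}$ with $(0,\f{Q_2})\cap H=\{(0,0)\}$.

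For steps (ii)--(iii): the canonical isomorphisms $H/((\f{Q_1},0)\cap H)\cong((\f{Q_1},0)+H)/(\f{Q_1},0)\cong\pi_2(H)$ give $j-\dim_{\f{q}}((\f{Q_1},0)\cap H)=\dim_{\f{q}}\pi_2(H)\leq k_2$, so $\dim_{\f{q}}((\f{Q_1},0)\cap H)\geq\max\{0,\,j-k_2\}$; also $(0,\f{Q_2})\cap H=\{0\}$ forces $|(\f{Q_1},0)\cap H|\geq1$. To see the bound is attained: when $j\leq k_2$ (covering $k_1\leq k_2$ and the subcase $j\leq k_2<k_1$) take $H=\mathrm{span}_{\f{q}}\{(v_1,w_1),\dots,(v_j,w_j)\}$ with $v_1,\dots,v_j$ a partial $\f{q}$-basis of $\f{Q_1}$ and $w_1,\dots,w_j$ a partial $\f{q}$-basis of $\f{Q_2}$, so that $\pi_1$ and $\pi_2$ are both injective on $H$, giving $(0,\f{Q_2})\cap H=\{0\}$ and $|(\f{Q_1},0)\cap H|=1$; when $k_2<j\leq k_1$ take $H=\mathrm{span}_{\f{q}}\{(\alpha_1^i,\alpha_2^i):0\leq i\leq j-1\}$, which has dimension $j$ (the powers $1,\alpha_1,\dots,\alpha_1^{j-1}$ are $\f{q}$-independent as $\f{q}(\alpha_1)=\f{Q_1}$ and $j\leq k_1$), satisfies $(0,\f{Q_2})\cap H=\{0\}$ since $\pi_1$ is injective on it, and has $\pi_2(H)=\f{Q_2}$ since $j>k_2$, whence $\dim_{\f{q}}((\f{Q_1},0)\cap H)=j-k_2$ and $|(\f{Q_1},0)\cap H|=q^{j-k_2}$. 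Substituting these optimal values into the displayed expression for $N_j(D)$ and rewriting the outcome as a difference of consecutive powers of $q$ produces the three claimed formulae for $N_j$.

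The final algebraic rewriting is purely routine; the step that needs genuine care is the closed-form evaluation of $B$, specifically the collapse of the double sum to the single term $\lambda_1=\lambda_2=0$. That collapse rests on $e_1=1$ (which kills $\lambda_1$) together with $\gcd(q-1,\frac{Q_2-1}{q-1})=1$, which is precisely the role of the hypothesis $2\nmid k_2$ and must be invoked explicitly; without it additional Gauss-sum contributions would survive and the clean formula would break down. A secondary point to verify is that the extremal subspace in the range $k_2<j\leq k_1$ really does meet $(0,\f{Q_2})$ trivially and actually attains $\dim_{\f{q}}((\f{Q_1},0)\cap H)=j-k_2$ rather than merely being bounded below by it; once these are checked the corollary follows.
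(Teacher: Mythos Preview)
Your proposal is correct and follows essentially the same route as the paper: evaluate $A$ and $B$ under the hypotheses $e_1=1$, $e_2=q-1$, $2\nmid k_2$ to reduce $N_j(D)$ to the closed form depending only on $|(\f{Q_1},0)\cap H|$, then minimize that quantity over admissible $H$ by the obvious dimension bound and an explicit construction in each range of $j$.

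One small remark: in the case $k_2<j\leq k_1$ your construction $H=\mathrm{span}_{\f{q}}\{(\alpha_1^i,\alpha_2^i):0\leq i\leq j-1\}$ is the right one. The paper at that point writes $m=k_1+k_2-j$ and spans $0\leq i\leq m-1$, which is a slip carried over from the dual-space conventions used for the first formula; the intended object is of dimension $j$, exactly as you have it, and with that correction the argument and the resulting value $|(\f{Q_1},0)\cap H|=q^{j-k_2}$ agree.
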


Similarly, one can show
\begin{cor}
 Let $\gamma_1$ and $\gamma_2$ be two primitive elements of $\f{Q_1}$ and
 $\f{Q_2}$, respectively, where $Q_1=q^{k_1}, Q_2=q^{k_2}, \gcd(k_1, k_2)=1$ and $2\nmid k_1$. Take $e_1=q-1, e_2=1$ and $\alpha_i=\gamma_i^{e_i}$ for $i=1,2$. Let $n_i$ be the order of $\alpha_i$, for $i=1,2$. Let $C, C'$ be the
cyclic codes defined by~(\ref{code1}) and~(\ref{code2}),
respectively. Then the length of $C$ and $C'$ is $n=n_1n_2=(Q_1-1)(Q_2-1)/(q-1)$. For $1\leq j\leq k_1$, the $j$th RGHW of $C$ respect to $C'$ is
\[
   M_j(C,C')=n-N_j,
\]
where if $k_1\leq k_2$, then
\[
N_j=\sum_{\kappa=k_2-j}^{k_1+k_2-j-1}q^\kappa-\sum_{\iota=0}^{k_1-j-1}q^\iota;
\]
if $k_2< k_1$, then
\begin{equation*}
   N_j=\left\{%
\begin{array}{ll}
   \sum_{\kappa=k_1-j}^{k_1+k_2-j-1}q^\kappa-\sum_{\iota=0}^{k_2-j-1}q^\iota, & \hbox{if $1\leq j\leq k_2$;} \\
   q^{k_1-j}\sum_{\iota=0}^{k_2-1}q^\iota, & \hbox{if $k_2<j\leq k_1$.} \\
\end{array}%
\right.
\end{equation*}
\end{cor}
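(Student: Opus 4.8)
The plan is to rerun the computation of the subsection ``Application of the Second Formula'' with the two coordinates $1$ and $2$ interchanged, checking that the hypothesis $2\nmid k_1$ now does the work that $2\nmid k_2$ did before. First I would record the parameters: with $e_1=q-1$, $e_2=1$ one has $n_1=\tfrac{Q_1-1}{q-1}$ and $n_2=Q_2-1$, and I would verify $\gcd(n_1,n_2)=1$: a common divisor of $n_1$ and $n_2$ divides $\gcd(q^{k_1}-1,q^{k_2}-1)=q^{\gcd(k_1,k_2)}-1=q-1$, hence divides $\gcd\!\left(\tfrac{q^{k_1}-1}{q-1},q-1\right)=\gcd(k_1,q-1)$, which is $1$ under the hypothesis; so $n=n_1n_2=\tfrac{(Q_1-1)(Q_2-1)}{q-1}$ as claimed.

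Next I would specialize the general expressions for $A$ and $B$ to this case. Since $e'_1=\gcd\!\left(e_1,\tfrac{Q_1-1}{q-1}\right)=\gcd(q-1,k_1)=1$, the ``$e'_1=1$'' reduction applies and gives $A=-\tfrac{n\,|(\f{Q_1}^*,0)\cap H|}{Q_1-1}=-\tfrac{(q^{k_2}-1)\,|(\f{Q_1}^*,0)\cap H|}{q-1}$. For $B$, the value $e_2=1$ forces the index $\lambda_2$ to be $0$, the condition $\chi_1^{\lambda_1}\chi_2^{\lambda_2}(\f{q}^*)=1$ then becomes $\chi_1^{\lambda_1}(\f{q}^*)=1$, and $e'_1=1$ forces $\lambda_1=0$ as well, so $B=\tfrac{|(\f{Q_1}^*\times\f{Q_2}^*)\cap H|}{q-1}$. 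Plugging these into $N_j(D)=\tfrac{n}{q^j}+\tfrac{1}{q^j}(A+B)$, and using that $D\cap C'=\{0\}$ makes $\pi_1$ injective on $H$ (so $(0,\f{Q_2})\cap H=\{0\}$ and $|(\f{Q_1}^*\times\f{Q_2}^*)\cap H|=q^j-|(\f{Q_1},0)\cap H|$), I expect to arrive at
\[
  N_j(D)=\frac{q^{k_1+k_2}-q^{k_1}+q^j-q^{k_2}\,|(\f{Q_1},0)\cap H|}{q^j(q-1)},
\]
exactly the formula obtained in the $e_1=1$, $e_2=q-1$ case. So, as there, I must then maximize $N_j(D)$, i.e.\ minimize $|(\f{Q_1},0)\cap H|$, over $H\in\gbinom{\f{Q_1}\times\f{Q_2}}{j}$ with $\pi_1|_H$ injective.

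Finally I would reuse the two extremal constructions. When $j\leq\min(k_1,k_2)$ — which covers $k_1\leq k_2$ and the subcase $1\leq j\leq k_2<k_1$ — the span of $(v_1,w_1),\dots,(v_j,w_j)$ with $\{v_i\}$ and $\{w_i\}$ partial $\f{q}$-bases of $\f{Q_1}$ and $\f{Q_2}$ has $\pi_1|_H$ injective and $(\f{Q_1},0)\cap H=\{0\}$, the minimum; substituting $|(\f{Q_1},0)\cap H|=1$ and simplifying produces the stated value of $N_j$ when $k_1\leq k_2$ and in the range $1\leq j\leq k_2$ when $k_2<k_1$. When $k_2<j\leq k_1$, the isomorphisms $H/((\f{Q_1},0)\cap H)\cong\pi_2(H)$ force $\dim_{\f q}((\f{Q_1},0)\cap H)\geq j-k_2$, and equality is realized by $H=\mathrm{span}_{\f q}\{(\alpha_1^i,\alpha_2^i):0\leq i\leq k_1+k_2-j-1\}$, provided $\alpha_1=\gamma_1^{q-1}$ generates $\f{Q_1}$ over $\f q$ (so that $\dim_{\f q}H=k_1+k_2-j$ and $\pi_1|_H$ is injective) while $\alpha_2=\gamma_2$ primitive and $k_1+k_2-j\geq k_2$ give $\pi_2(H)=\f{Q_2}$; substituting $|(\f{Q_1},0)\cap H|=q^{j-k_2}$ yields $N_j=q^{k_1-j}\sum_{\iota=0}^{k_2-1}q^\iota$. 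The one place where genuine care is needed — the main obstacle — is ensuring that the single hypothesis $2\nmid k_1$ really is strong enough in the swapped setting: it must simultaneously give $e'_1=1$, the coprimality $\gcd(n_1,n_2)=1$, and the length formula. Alongside this I would need the field-generation fact for $\gamma_1^{q-1}$, which mirrors the one used silently for $\gamma_2^{q-1}$ in the previous corollary: its order $\tfrac{q^{k_1}-1}{q-1}$ exceeds $q^d-1$ for every proper divisor $d\mid k_1$, hence cannot divide it, so $\gamma_1^{q-1}$ lies in no proper subfield of $\f{Q_1}$. Everything else is the bookkeeping already done in the $e_1=1$, $e_2=q-1$ case, read with the indices $1$ and $2$ exchanged.
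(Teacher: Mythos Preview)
Your approach is exactly what the paper intends by ``Similarly, one can show'': you swap the roles of the two coordinates in the computation preceding Corollary~2.2, check that the single hypothesis $2\nmid k_1$ forces $e'_1=\gcd(q-1,\tfrac{Q_1-1}{q-1})=1$ and $\gcd(n_1,n_2)=1$, and then reuse the same extremal constructions for $H$. The reduction of $A$ and $B$ and the resulting expression for $N_j(D)$ are correct and match the paper's line for line.

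There is one slip in the case $k_2<j\le k_1$, which you have copied from the paper's own argument for Corollary~2.2. In the second-formula framework you have set up, the space $H=\varphi(D)$ must have $\dim_{\f q}H=j$, as you yourself say when describing the optimization. But the $H$ you write down, $\mathrm{span}_{\f q}\{(\alpha_1^i,\alpha_2^i):0\le i\le k_1+k_2-j-1\}$, has dimension $k_1+k_2-j$; for that space $\ker(\pi_2|_H)$ has dimension $k_1-j$, not $j-k_2$, so the substitution $|(\f{Q_1},0)\cap H|=q^{j-k_2}$ does not follow from it. The fix is immediate: take instead $H=\mathrm{span}_{\f q}\{(\alpha_1^i,\alpha_2^i):0\le i\le j-1\}$. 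Since $j\le k_1$ and $\alpha_1$ generates $\f{Q_1}$ over $\f q$, this $H$ has dimension $j$ with $\pi_1|_H$ injective; since $j>k_2$ and $\alpha_2$ is primitive, $\pi_2(H)=\f{Q_2}$, so $\dim_{\f q}\bigl((\f{Q_1},0)\cap H\bigr)=j-k_2$, and your substitution and the final formula then go through. (The paper's index range in the corresponding step of Section~2.4 is a carry-over from the first-formula setting of Section~2.2, where $H$ is the \emph{dual} space of dimension $k_1+k_2-j$; in the second-formula setting one works directly with $H=\varphi(D)$ of dimension $j$.)
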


\section{Conclusions}
In this paper, we give two general formulae for the RGHWs of the cyclic codes with two nonzeros respect to one of its two irreducible cyclic subcodes.
By using the formulae, RGHWs of two explicit classes of cyclic codes are obtained. Comparing with the computation of GHWs of the linear code, RGHWs have
one more restriction which is difficult to describe or satisfy in general when one applies the techniques from computing GHWs. For example, if one wants to
compute the RGHWs of irreducible cyclic codes by using the method in this paper, then the restriction on the intersection with the subcode is hard to describe. So
the method in this paper is not suitable in this scenario. We leave this problem as a future research problem.

\bibliographystyle{IEEEtr}
\bibliography{RGHW}



\end{document}